\lstdefinestyle{mystyle}{
  backgroundcolor=\color{backcolour},
  commentstyle=\color{codegreen},
  keywordstyle=\color{magenta},
  numberstyle=\tiny\color{codegray},
  stringstyle=\color{codepurple},
  basicstyle=\ttfamily\footnotesize,
  breakatwhitespace=false,
  breaklines=true,
  captionpos=b,
  keepspaces=true,
  numbers=left,
  numbersep=5pt,
  showspaces=false,
  showstringspaces=false,
  showtabs=false,
  tabsize=2
}
\newcommand{\la}{\lambda}
\newcommand{\La}{\Lambda}
\newcommand{\al}{\alpha}
\newcommand{\si}{{\sigma}}
\newcommand{\HEIS}{{\mrm{HEIS}}}
\newcommand{\ii}{\mathrm{i}}
\newcommand{\ee}{\mathrm{e}}
\newcommand{\eq}[1]{Eq.~(\ref{eq:#1})}
\newcommand{\fig}[1]{Fig.~\ref{fig:#1}}
\renewcommand{\sec}[1]{Sec.~\ref{sec:#1}}
\newcommand{\app}[1]{Appendix~\ref{sec:#1}}
\newcommand{\tab}[1]{Table~\ref{tab:#1}}
\newcommand{\thm}[1]{Theorem~\ref{thm:#1}}
\newcommand{\alg}[1]{Algorithm~\ref{alg:#1}}
\newcommand{\SWAP}{\mathrm{SWAP}}
\newcommand{\CNOT}{\mathrm{CNOT}}
\newcommand{\id}{\mathbbm{1}}
\newcommand{\nn}{\nonumber}
\let\perptmp\perp
\renewcommand{\perp}{{\! \mathsmaller{\perptmp}}}
\newcommand{\mrm}{\mathrm}
\newtheorem{theorem}{Theorem}
\theoremstyle{definition}
\newtheorem{algorithm}{Algorithm}
\newcommand{\abox}[1]{\begin{array}{|c|} \hline #1 \\ \hline \end{array}}
\def\@bibdataout@aps{%
  \immediate\write\@bibdataout{%
    @CONTROL{%
      apsrev41Control%
      \longbibliography@sw{%
        ,author="08",editor="1",pages="1",title="0",year="1"%
      }{%
        ,author="08",editor="1",pages="1",title="",year="1"%
      }%
    }%
  }%
  \if@filesw \immediate \write \@auxout {\string \citation {apsrev41Control}}\fi
}
\newcommand{\claw}{
  \hspace{-1em}
  \raisebox{-.25 em}{
    \begin{tikzpicture}[scale=0.1]
    \draw[gray, thick] (-1,1) -- (0,0);
    \draw[gray, thick] (1,1) -- (0,0);
    \draw[gray, thick] (0,-1.4142) -- (0,0);
    \filldraw[black] (-1,1) circle (5pt);
    \filldraw[black] (1,1) circle (5pt);
    \filldraw[black] (0,-1.4142) circle (5pt);
    \filldraw[black] (0,0) circle (5pt);
    \end{tikzpicture}
  }
  \hspace{-1em}
}
\newcommand{\heavy}{\mathrm{heavy}}
\definecolor{blue}{rgb}{0.12156862745098039, 0.4666666666666667, 0.7058823529411765}
\definecolor{orange}{rgb}{1.0, 0.4980392156862745, 0.054901960784313725}
\definecolor{green}{rgb}{0.17254901960784313, 0.6274509803921569, 0.17254901960784313}
\definecolor{red}{rgb}{0.8392156862745098, 0.15294117647058825, 0.1568627450980392}
\definecolor{4}{rgb}{0.5803921568627451, 0.403921568627451, 0.7411764705882353}
\definecolor{5}{rgb}{0.5490196078431373, 0.33725490196078434, 0.29411764705882354}
\definecolor{6}{rgb}{0.8901960784313725, 0.4666666666666667, 0.7607843137254902}
\definecolor{7}{rgb}{0.4980392156862745, 0.4980392156862745, 0.4980392156862745}
\definecolor{8}{rgb}{0.7372549019607844, 0.7411764705882353, 0.13333333333333333}
\definecolor{9}{rgb}{0.09019607843137255, 0.7450980392156863, 0.8117647058823529}
\begin{document}

\author{Joris Kattem\"olle}
\affiliation{Department of Physics, University of Konstanz, Konstanz, Germany}
\author{Seenivasan Hariharan}
\affiliation{Institute of Physics, University of Amsterdam, Amsterdam, The Netherlands}
\affiliation{
QuSoft, CWI, Amsterdam, The Netherlands}
\title{Line-graph qubit routing:\\from kagome to heavy-hex and more}

\begin{abstract}
  Quantum computers have the potential to outperform classical computers, but are currently limited in their capabilities. One such limitation is the restricted connectivity between qubits, as captured by the hardware's coupling graph. This limitation poses a challenge for running algorithms that require a coupling graph different from what the hardware can provide. To overcome this challenge and fully utilize the hardware, efficient qubit routing strategies are necessary. In this paper, we introduce line-graph qubit routing, a general method for routing qubits when the algorithm's coupling graph is a line graph and the hardware coupling graph is a heavy graph. Line-graph qubit routing is fast, deterministic, and effective; it requires a classical computational cost that scales at most quadratically with the number of gates in the original circuit, while producing a circuit with a SWAP overhead of at most two times the number of two-qubit gates in the original circuit. We implement line-graph qubit routing and demonstrate its effectiveness in mapping quantum circuits on kagome, checkerboard, and shuriken lattices to hardware with heavy-hex, heavy-square, and heavy-square-octagon coupling graphs, respectively. Benchmarking shows the ability of line-graph qubit routing to outperform established general-purpose methods, both in the required classical wall-clock time and in the quality of the solution that is found. Line-graph qubit routing has direct applications in the quantum simulation of lattice-based models and aids the exploration of the capabilities of near-term quantum hardware.
\end{abstract}

\maketitle

\section{Introduction}\label{sec:introduction}
Quantum computing offers potential to revolutionize a wide range of domains by efficiently solving problems that are intractable for classical computers~\cite{montanaro2016quantum,cerezo2021variational}. To run any quantum algorithm, it must be compiled into a quantum circuit that can be executed on the quantum hardware. The hardware coupling graph of a quantum computer, which defines adjacency between qubits based on the ability to perform two-qubit gates between them, plays a crucial role in this process. The problem of ensuring that a quantum circuit is compatible with the hardware coupling graph is referred to as the qubit routing problem~\cite{childs2019circuit, cowtan2019qubit}. While generalized methods exist for qubit routing~\cite{bapat2023advantages}, a standard approach to implement two-qubit gates between non-adjacent qubits is to insert SWAP gates, making the qubits effectively adjacent~\cite{maslov2008quantum,childs2019circuit,li2019mapping,siraichi2019routing,wille2019mapping,cowtan2019qubit}. To obtain a practical quantum advantage on noisy intermediate-scale quantum (NISQ)~\cite{preskill2018qcnisq} devices, it is imperative that overhead arising from compilation is kept to a minimum~\cite{franca2021limitations,Martiel2022architectureaware}. However, finding the  swapping strategy that requires the least number of SWAP gates is NP-hard~\cite{maslov2008quantum,childs2019circuit}, making it a challenging problem to solve. Heuristic, probabilistic methods have been developed, but their classical runtime may become problematic for large circuits, and the solution they find may be far from optimal~\cite{li2019mapping,siraichi2019routing,wille2019mapping,cowtan2019qubit}. Striking the right balance between the classical resources required for routing and minimizing the circuit depth of the routed quantum circuit is crucial in maximizing the performance of NISQ devices.

The qubit routing problem is particularly evident in the quantum simulation of lattice-based spin models. One of the first areas in which it was realized that quantum computers could outperform classical computers was that of quantum simulation~\cite{feynman1982simulating}. When applied specifically to lattice-based spin systems with two-body interactions, quantum simulation by Trotterization~\cite{lloyd1996universal} approximates the overall time evolution operator of the quantum-mechanical system by a sequence of two-qubit gates, where each two-qubit gate corresponds to the time evolution according to one two-body term in the Hamiltonian (dynamic quantum simulation)~\cite{burkard2022recipes,kattemolle2022variational}. Additionally, by introducing variable parameters for the per-term evolution times, these circuits are transformed to circuits that prepare ansatz states for the variational quantum eigensolver (VQE)~\cite{wecker2015progress}, designed to variationally find the ground state of the quantum-mechanical system (static quantum simulation). Before any routing, these circuits for dynamic and static quantum simulation naturally require hardware with a coupling graph that is equal to the lattice of the lattice-based spin system (the virtual graph)~\cite{wecker2015progress}. There will generally be a mismatch between the virtual graph and the hardware coupling graph. Efficient qubit routing plays a crucial role in overcoming this mismatch.

\begin{table*}
\newcommand{\myfig}[1]{\parbox{10.3em}{\hspace{.5em}\includegraphics[width=9.3em]{#1.pdf}\hspace{.5em}\vspace{1em}}}
\def\arraystretch{2}
\setlength\tabcolsep{.5em}
\begin{tabular}{c|c|c|c|c}
& \myfig{complete}            & \myfig{kagome}              
& \myfig{shuriken_small}      & \myfig{checkerboard}  
\\ \hline
$L(G)$                  & complete                    & kagome                      
& \parbox{10em}{shuriken/ \\ square-kagome \vspace{.3em}}              & checkerboard          \\
$\heavy(G)$             & {\color{gray}(heavy)-}star  & heavy-hex                   
& heavy-square-octagon 
& \parbox{10em}{heavy-square/\\Lieb lattice~\cite{Lieb1989heavysquare}\vspace{.3em}}          
\\ \hline
(model) material                
& spin glasses~\cite{Sherrington1975spinglass, Parisi1979spinglass}          
& herbertsmithite~\cite{norman2016herbertsmithite}     
& atlasovite-like~\cite{Siddharthan2001squarekagomeHAFM, Fujihala2020squarekagome}      & planar pyrochlore~\cite{singh1998paradigm,fouet2003planar,bishop2012frustrated}  \\
\end{tabular}
  \caption{\label{tab:examples} Examples of line-graph routing. Line-graph routing maps any circuit with coupling graph $L(G)$ (blue edges) to circuits with coupling graph $\heavy(G)$ (black edges). Line-graph routing finds direct application in the quantum simulation of the magnetic properties of some (model) materials with coupling graph $L(G)$ (last row) on hardware with coupling graph $\heavy(G)$.}
\end{table*}

In this paper, we develop an efficient and deterministic qubit routing strategy, which we call line-graph qubit routing, or line-graph routing for short. It maps any circuit on a line graph $L(G)$ to hardware with coupling graph $\heavy(G)$. Here, $\heavy(G)$ is obtained from the graph $G$ by placing a node on every edge of $G$. We call these added nodes the heavy nodes of $\heavy(G)$. By definition, the nodes of the line graph $L(G)$ consist of the heavy nodes of $\heavy(G)$. In $L(G)$, two nodes are adjacent if the associated edges in $G$ are incident on the same node of $G$. It is instructive to verify this property for one of the pairs $(L(G),\heavy(G))$ in \tab{examples}. For example, line-graph routing maps the circuits for the quantum simulation of the Heisenberg anti-ferromagnet (HAFM) on the kagome lattice to hardware with a heavy-hex coupling graph. In this example, $G$ is the hexagonal lattice and $L(G)$ is the kagome lattice. Despite these examples, we stress that line-graph routing is applicable to any circuit on any line graph~$L(G)$.

The remainder of this paper is organized as follows. We first introduce line-graph routing by example, mapping circuits for the quantum simulation of the kagome HAFM to hardware with a heavy-hex coupling graph (\sec{kagome_heavyhex}). We formalize and generalize this approach to arbitrary circuits and arbitrary line graphs in \sec{general_case}. We benchmark our software implementation of the general algorithm against existing qubit routing approaches in \sec{implementation}, to conclude with a discussion and outlook in \sec{discussion}.

\section{Kagome to heavy-hex}\label{sec:kagome_heavyhex}
Line-graph routing is arguably most clearly explained with an example, which we do in this section by mapping circuits for the quantum simulation of the kagome HAFM to quantum hardware with a heavy-hex coupling graph. First, we use this example due to the relevance of the kagome spin model in exploring quantum phenomena like topological states of matter and quantum spin liquids~\cite{norman2016herbertsmithite,savary2017qsl}. The kagome lattice's significance extends to chemistry, as it is frequently observed in transition metal compounds and metal organic frameworks~\cite{nocera2004inorganic, chakraborty2021mof}. The ground state of the kagome HAFM is a long-standing open problem in quantum magnetism~\cite{lauchli2019kagomehafm} that can potentially be solved on NISQ devices~\cite{kattemolle2022variational}. By classical emulation of noiseless quantum computers, it was previously demonstrated that the ground-state energy found by a VQE approaches the true ground-state energy exponentially as a function of the circuit depth~\cite{kattemolle2022variational}.

Second, the heavy-hex coupling graph is the coupling graph of IBM's current and future superconducting hardware~\cite{nation_paik_cross_nazario_2022,bravyi2022quantumcentric}. Among the emerging quantum hardware platforms, IBM's superconducting qubits have gained significant attention due to their rapid development and scalability in the NISQ era. To optimize this superconducting qubit hardware and mitigate the occurrence of frequency collisions and crosstalk~\cite{brink2018frequencycollisions,hertzberg2021frequencycollisions, Sarovar2020detectingcrosstalk}, error correcting codes are designed on low-degree graphs such as heavy-hex and heavy-square lattices, preventing errors during program execution~\cite{chamberland2020heavylattices,nation_paik_cross_nazario_2022, bravyi2022quantumcentric}. This motivates the further development of hardware with these types of connectivity graphs.

The relevance of the kagome-to-heavy-hex mapping was further highlighted by the IBM Quantum’s Open Science Prize 2022, where the challenge was to prepare the ground state of the kagome HAFM using a VQE and implement it on a 16-qubit IBM Quantum Falcon device with a heavy-hex coupling graph~\cite{Lanes_Rasmusson_2023IBMprize}. It is important to note that, also within the context of quantum simulation, the routing problem is not unique to the quantum simulation of spin problems on the kagome lattice. Other lattice-based spin models, such the HAFM on the shuriken lattice, are also known for their geometric frustration and challenging simulation~\cite{Wu2023vscore}. 

\subsection{Line-graph routing}\label{sec:line_graph_routing}
The first step in line-graph routing is establishing a one-to-one correspondence between the nodes of the virtual graph (in this section, the kagome lattice) and the hardware connectivity graph (in this section, the heavy-hex lattice). This correspondence is achieved by aligning the nodes of the kagome lattice with the heavy nodes of the heavy-hex lattice, as shown in \fig{four_coloring}. Subsequently, the light (non-heavy) nodes of the heavy-hex lattice are used to mediate two-qubit gates between the spins on the nodes of the kagome lattice.

To see this in more detail, assume a kagome quantum circuit, that is, a circuit composed of single-qubit gates on qubits $\{i\}$ and two-qubit gates along the edges $\{(i,j)\}$ of a patch of the kagome lattice. To map the circuit from the kagome to the heavy-hex lattice, we label the heavy qubits on the heavy-hex lattice with the labels $\{i\}$ of the congruent qubits on the kagome lattice, as shown in \fig{four_coloring}. Under this identification, any single-qubit gate in the kagome circuit is trivially mapped to a single-qubit gate on the heavy-hex lattice.

To map the two-qubit gates, let us label the $\ell$th two-qubit gate in the kagome circuit, acting on qubits $(i,j)$, by $U^{\ell}_{ij}$. Any such gate can be performed on the heavy-hex lattice by mapping it to the three-qubit \emph{mediated two-qubit gate} $MU$
\begin{equation}\label{eq:mediated_gate}
    U^\ell_{ij}\mapsto M\!U^{\ell}_{imj}=\SWAP_{mi} U^\ell_{mj} \SWAP_{i m},
\end{equation}
 where qubit $m=m_{ij}$ mediates the interaction between qubits $i$ and $j$. This map provides the cornerstone of line-graph routing. The key point of line-graph routing is that for every pair of qubits $(i,j)$ the existence and uniqueness of the mediating qubit $m=m_{ij}$ is guaranteed by the definition of $L(G)$ and $\heavy(G)$ (see \sec{introduction}).

Equation \eqref{eq:mediated_gate} introduces many SWAP gates that need not be performed physically. First, SWAP gates occurring at the beginning and end of the routed circuit can be accounted for by a relabeling of the qubits. Second, any two consecutive SWAP gates can be cancelled. These double SWAP gates are introduced by \eq{mediated_gate} if there are two consecutive two-qubit gates acting on the same two qubits (possibly with additional single-qubit gates on those qubits in between). Double SWAP gates are also introduced by \eq{mediated_gate} in the case of two consecutive two-qubit gates that have a single qubit in common and where the two resulting mediated gates have a mediating qubit in common. That is, if $m=m_{ij}=m_{ik}$, we have by \eq{mediated_gate} that
  \begin{align}\label{eq:cancel_SWAP}
    U^{\ell '}_{i k}U^{\ell}_{ij}&\!\mapsto\!(\SWAP_{mi}U^{\ell'}_{mk}\SWAP_{im})(\SWAP_{m i}U^\ell_{mj} \SWAP_{i m})\nn\\
    &\!=\SWAP_{mi}U^{\ell'}_{mk}U^\ell_{m j} \SWAP_{i m}.
  \end{align}

In summary, line-graph routing first associates the nodes of $L(G)$ with the heavy nodes of $\heavy(G)$, applies the map of \eq{mediated_gate} to all two-qubit gates, and finally removes superfluous SWAP gates as described above.

\subsection{Application: quantum simulation}\label{sec:quant_sim}

\begin{figure}[t]
  \centering
    \includegraphics[width=.9 \columnwidth]{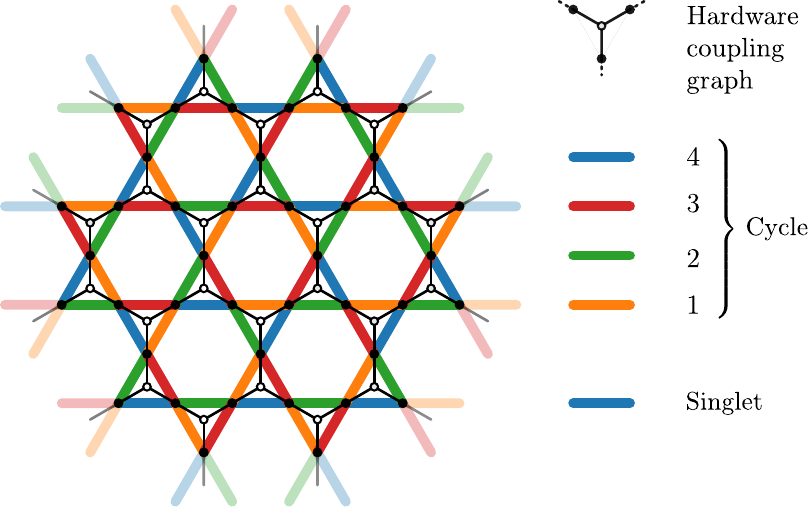}
    \caption{\label{fig:four_coloring} The kagome lattice (colored edges) is the line graph of the hexagonal lattice (black edges). As shown in the figure, the nodes of the kagome lattice can be identified with the heavy nodes of the heavy-hex lattice. The colored edges represent one possible circuit with a kagome coupling graph. Every color represents a layer in this circuit. In the first layer, singlet states are created along the blue lines. Thereafter, $\HEIS$ gates [\eq{HEIS_gate}] are applied along all colored edges in sequence, defining one circuit cycle. This cycle is repeated to obtain the complete circuit. Figure adapted from Ref.~\cite{kattemolle2022variational}.}
  \end{figure}
  
  One immediate application of the kagome-to-heavy-hex mapping is in the quantum simulation of the kagome HAFM on heavy-hex quantum hardware. In units where $\hbar=1$, the HAFM has a Hamiltonian
  \begin{align}\label{eq:heisenberg}
    H=\sum_{(i,j)} H_{ij}, &&H_{ij}=X_iX_j+Y_iY_j+Z_iZ_j,
  \end{align}
  where the sum is over all edges $(i,j)$ of a given graph. In the current section, this graph could be any patch of the kagome lattice. Here, $X_i$ denotes the Pauli-$X$ operator acting on qubit $i$ (similarly for $Y_i,Z_i$). This Hamiltonian is straightforwardly generalized to arbitrary two-spin interactions along the edges of a graph~\footnote{These are described by $H=\sum_{(i,j)} P_{ij}+\sum_i P_i$, with two-spin terms $P_{ij}=\sum_{kl}\gamma_{ij}^{(kl)}\si_i^{(k)}\si_j^{(l)}$ and single-spin terms $P_{i}=\sum_{k}\gamma_{i}^{(k)}\si_i^{(k)}$, where $\{\si^{(k)}_i\}=\{\id,X_i,Y_i,Z_i\}$.}, which would conceptually not change the constructions that follows.

\begin{figure*}
  \centering
  \includegraphics[width=\textwidth]{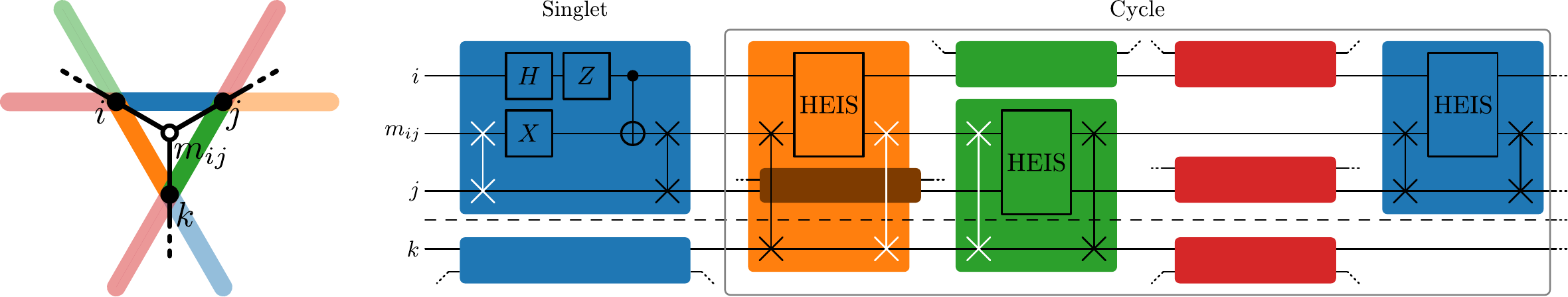}
  \caption{\label{fig:circuit} The line-graph routed quantum circuit for the quantum simulation of the kagome HAFM (\fig{four_coloring}), focusing on one triangle. The circuits involving the other triangles are similar. Empty colored boxes indicate subcircuits coming in from other triangles. The white $\SWAP$ gates can be omitted by a relabeling of the qubits (first SWAP gate) and the cancellation of double SWAP gates [other white SWAP gates, cf. \eq{cancel_SWAP}].}
\end{figure*}

  The goal of dynamic quantum simulation is to compute expectation values of observables with respect to the time-evolved state $\ket{\psi(t)}=\ee^{-\ii H t}\ket{\psi(0)}$, given some initial state $\ket{\psi(0)}$. On a quantum computer, this can be achieved by applying the unitary $\prod_{ij}\ee^{-\ii H_{ij} (t/r)}$ to $\ket{\psi(0)}$ (the latter of which is assumed to be easy to prepare) a total of $r$ times. That is,
  \begin{equation}\label{eq:trotterization}
    \ket{\psi(t)}\approx \left( \prod_{ij}\ee^{-\ii \frac{t}{r} H_{ij}}\right)^r\ket{\psi(0)}.
  \end{equation}
  The error in this approximation is of the order $t^2/r$~\cite{lloyd1996universal}, assuming a perfect quantum computer. After the preparation of $\ket{\psi(t)}$, expectation values can be extracted by repeated preparation and measurements. Note that $H_{ij}$ acts on two qubits, so that $\ee^{-\ii H_{ij} (t/r)}$ is a two-qubit unitary that can be decomposed into a few one and two-qubit gates that act on qubits $i,j$ only. In the case of the HAFM, $\ee^{-\ii H_{ij} (t/r)}$ is called the $\HEIS$ gate~\cite{kattemolle2022variational}
  	\begin{align}
		\mrm{HEIS}_{ij}(\al)&\equiv\ee^{-\ii\al/4}\ee^{-\ii \al H_{ij}/4},\label{eq:HEIS_gate}
    \end{align}
    where, in anticipation of static simulation, we have set $\al=4 t/r$, and where the physically irrelevant prefactor $\ee^{-\ii\al /4}$ and a factor of 1/4 in the exponent are included for consistency with Ref.~\cite{kattemolle2022variational}.

    We can go from the circuit for dynamical simulation [\eq{trotterization}] to the circuits needed for static quantum simulation by a VQE~\cite{peruzzo2014vqe, mcClean2016vqe, cerezo2021variational} by considering $\al$ as a free parameter in every occurrence of the HEIS gate. VQEs form are a promising method for computing ground state energies of various many-body systems in the NISQ era~\cite{peruzzo2014vqe, mcClean2016vqe, cerezo2021variational}. In VQEs, a parameterized quantum state is prepared with a parameterized circuit. The energy of this state is measured and optimized using a classical heuristic optimization routine. By the variational principle, the lowest energy that is found in this way provides an upper bound on the ground-state energy. In principle, there are no a priori restrictions on the structure of the ansatz circuit~\cite{peruzzo2014vqe}. In the subclass of VQEs using the so-called Hamiltonian variational ansatz (HVA), however, each gate in the parameterized circuit is either formed by parameterized time evolution along a term in the Hamiltonian, or by parameterized time evolution according to some reference Hamiltonian~\cite{wecker2015progress}. The initial state of the circuit is the (known and easy-to-prepare) ground state of the reference Hamiltonian.

    One possible quantum circuit following from the above considerations is depicted in \fig{four_coloring} (colored edges). This circuit can both be used for the dynamic quantum simulation (fixed parameters $\al$) or for static quantum simulation using the HVA (free parameters $\al$, to be optimized by a classical optimization routine). Here, every color depicts a different layer of the circuit. In the first layer, singlet states $(\ket{01}-\ket{10})/\sqrt{2}$ are placed along the blue edges. This provides the initial state of either the circuit for dynamical or static quantum simulation. It is the ground state of a reference Hamiltonian $\sum_{(i,j)} H_{ij}$, where the sum is over the blue edges $(i,j)$. After preparation of the initial state, $\HEIS$ gates are placed along all orange, green, red, and blue edges. This combination of $\HEIS$ gates defines one \emph{cycle} of the circuit. The cycle is repeated $r$ times. (The color blue defines both the initial state and the last layer of the cycle.) Alternative circuits for dynamic or static quantum simulation (using the HVA) may be achieved by changing the initial state or the order of the $\HEIS$ gates.

    Line-graph routing maps the kagome circuit from \fig{four_coloring} to a circuit on the heavy-hex lattice by first identifying nodes of the respective lattices, as in \fig{four_coloring}. Subsequently, \eq{mediated_gate} is applied to all gates, after which superfluous $\SWAP$ gates are removed [\eq{cancel_SWAP}]. The routed circuit thus obtained is shown in \fig{circuit}.

\section{General case}\label{sec:general_case}
Here, we formalize and generalize the routing strategy from the previous section, which leads to our main results, \thm{line-graph_routing} and \alg{line-graph_routing}. To introduce notation, we now define line- and heavy graphs more formally. The graph $\heavy(G)$ is obtained from $G$ by defining a node $i$ and edges  $(m,i)$, $(i,m')$ for each edge $(m,m')$ in $G$. The nodes $i$ are referred to as the heavy nodes. A minor but subtle point is that this construction may lead to paths of length three as induced subgraphs of $\heavy(G)$ in which two adjacent nodes are of degree one. That is, induced subgraphs of the form $P=\{(m,i),(i,m')\}$, with $m$ and $i$ nodes of degree one. For example, in the star graph (\tab{examples}), all paths emanating from the center node (i.e., the `rays' of the star) are of the form of $P$. Line-graph routing [previewed in \eq{mediated_gate}] will never use mediating qubits associated with nodes locally equivalent (including neighbors and next-nearest neighbors) to node $m$ in $P$ and can therefore be discarded without affecting the routed circuit. In the example of the star graph (Table \ref{tab:examples}), this means the qubits associated with the gray nodes may be removed from the routed circuit. To avoid frequent mention of minor but subtle point, in this paper we also refer to heavy graphs where the nodes locally equivalent (including neighbors and next-nearest neighbors) to node $m$ in $P$ are removed as heavy graphs. After these removals, we still refer to nodes locally equivalent (only including neighbors) to node $i$ in $P$ as heavy nodes.

The line graph of $G$, $L(G)$, is defined in terms of $\heavy(G)$ as follows: construct $\heavy(G)$ from $G$ and let the node set of $L(G)$ be equal to the set of heavy nodes $\{i\}$ of $\heavy(G)$. An edge $(i,j)$ is added to $L(G)$ if and only if there exists a node $m$ in $G$ such that $(i,m)$ and $(m,j)$ are edges in $\heavy(G)$.

We say that a quantum circuit $C$, consisting of one- and two-qubit gates by assumption, has coupling graph $G = (V,E)$ if $V$ is equal to the set of qubit labels in $C$ and if $(i,j)$ in $E$ if and only if there exists a two-qubit gate $U_{ij}$ in $C$. If a \emph{quantum computer} has coupling graph $G = (V,E)$, then it can perform arbitrary single-qubit gates on the qubits associated with each node in $V$ and arbitrary two-qubit gates along each edge in $E$. By \eq{mediated_gate}, we then have the following theorem.
\begin{theorem}[Line-graph routing]\label{thm:line-graph_routing}
  Every quantum circuit $C$ with coupling graph $L(G)$ can be performed on quantum hardware with coupling graph $\heavy(G)$ with a SWAP overhead of at most two times the number of two-qubit gates in $C$.
\end{theorem}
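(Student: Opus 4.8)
The plan is to establish the theorem constructively by exhibiting the explicit routing procedure and then bounding its SWAP overhead. First I would set up the key correspondence: by the definitions of $L(G)$ and $\heavy(G)$ given above, the qubits of $C$ (nodes of $L(G)$) are placed in one-to-one correspondence with the heavy nodes $\{i\}$ of $\heavy(G)$. The crucial structural fact to invoke is that for every edge $(i,j)$ of $L(G)$ there exists a \emph{unique} node $m=m_{ij}$ of $G$ with $(i,m)$ and $(m,j)$ edges of $\heavy(G)$; existence follows directly from the definition of $L(G)$, and I would argue uniqueness from the heavy-graph construction (each edge of $G$ becomes a single heavy node, so two heavy nodes $i,j$ can share at most one common light neighbor $m$). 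With this mediating qubit in hand, single-qubit gates map trivially, and each two-qubit gate $U^\ell_{ij}$ is replaced by the mediated gate $M\!U^\ell_{imj}$ of \eq{mediated_gate}, which is implementable on $\heavy(G)$ because every gate it contains acts only on the adjacent pairs $(m,i)$ and $(m,j)$.

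Next I would verify correctness of the map: I must check that the routed circuit implements the same unitary as $C$. This reduces to confirming that $\SWAP_{mi}\,U^\ell_{mj}\,\SWAP_{im}=U^\ell_{ij}$, which is the standard conjugation identity for relabeling a qubit through a SWAP, valid since $m$ is idle before the first SWAP brings the state of $i$ onto it. Applying this gate-by-gate to $C$ shows the two circuits agree.

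The counting of the SWAP overhead is where the factor of two appears and is the part requiring the most care. The naive substitution of \eq{mediated_gate} introduces exactly two SWAP gates per two-qubit gate, giving an overhead of $2T$ with $T$ the number of two-qubit gates in $C$; this already proves the stated \emph{upper} bound of ``at most two times.'' To make the bound meaningful and match the informal claims, I would then show the superfluous SWAPs can be removed: the leading and trailing SWAP gates on each mediating qubit can be absorbed into a relabeling of qubits, and any pair of consecutive two-qubit gates that share a mediating qubit produces a cancelling double SWAP, as spelled out in \eq{cancel_SWAP}. Since these removals only \emph{decrease} the count, the $2T$ bound survives and the theorem follows.

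The main obstacle I anticipate is not the algebra but the careful handling of the degenerate configurations flagged in the heavy-graph definition, namely the degree-one path subgraphs $P=\{(m,i),(i,m')\}$ described just before the theorem. I would need to confirm that the mediating qubit $m_{ij}$ required by \eq{mediated_gate} is always a genuine light node that survives after the discarding of locally-equivalent nodes, so that existence and uniqueness of $m_{ij}$ are never compromised by those removals. Establishing that line-graph routing never calls on a discarded mediating qubit, i.e. that the construction is well-defined on the pruned heavy graph, is the delicate step; once it is settled, the overhead bound and correctness are immediate from the two preceding subsections.
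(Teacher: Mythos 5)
Your proposal is correct and follows essentially the same route as the paper's proof: identify the qubits of $C$ with the heavy nodes of $\heavy(G)$, use the existence and uniqueness of the mediating node $m_{ij}$ for each edge $(i,j)$ of $L(G)$ to replace each two-qubit gate by the mediated gate of \eq{mediated_gate}, and count two SWAPs per two-qubit gate. The extra material you add (the explicit uniqueness argument via shared endpoints of edges of $G$, the correctness of the SWAP conjugation, and the check that pruned degree-one light nodes are never called upon as mediators) goes beyond what the paper writes down but does not change the argument.
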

\begin{proof}
  By definition, there is a one-to-one correspondence between the nodes of $L(G)$ and the heavy nodes $\{i\}$ of $\heavy(G)$. Therefore, the single-qubit gates of $C$ can be mapped directly to hardware with coupling graph $\heavy(G)$. Furthermore, for every edge $(i,j)$ in $L(G)$, there are edges $(i,m)$ and $(m,j)$ in $\heavy(G)$, where $m=m_{ij}$ can be determined uniquely from $i$ and $j$. Thus, every two-qubit gate $U_{ij}$ in $C$ can be mapped to the three-qubit gate $MU_{imj}:=\SWAP_{mi}U_{mj}\SWAP_{im}$, leading to an overhead of 2$\la$ SWAP gates, with $\la$ the total number of two-qubit gates in $C$.
\end{proof}
Note that the above theorem does not make any assumption about the graph $G$. Also note that the theorem provides a  hierarchy of mappings, as $\heavy(G)$  itself may be a line graph. For example, invoking the above theorem twice gives a routing from $L(L(G))$ to $\heavy(L(G))$.

In practice, one may be given hardware with a coupling graph $H'$ and asked to find the class of circuits that can be run on this hardware using the line-graph construction. (For an overview of the graphs that follow, please see \fig{steps_algorithm}.) This task may arise if one has specific but limited quantum hardware available and wants to explore its capabilities by looking at quantum circuits that can be routed to this hardware with low overhead. The most immediate class of such quantum circuits is obviously formed by circuits with coupling graph $H'$. In case $H'$ is a heavy graph, $H'=\heavy(G)$, line-graph routing extends the possibilities to the class of circuits with coupling graph $L(G)$. If $H'$ is a heavy graph, it is trivial to find $G$ such that $H'=\heavy(G)$. It is also trivial to construct the line graph $L(G)$ from $G$. Therefore, given a heavy hardware coupling graph, it is trivial to find the class of circuits that can be run on it by line-graph routing. For example, given hardware with a heavy-hex coupling graph, it is trivial to find that line-graph routing yields the class of kagome circuits.

\begin{figure}
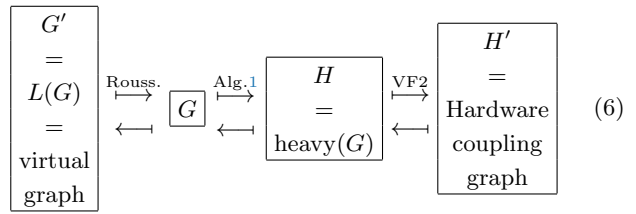

   \begin{equation}
  \abox{G'\\=\\L(G)\\=\\\mrm{virtual}\\\mrm{graph}}  \begin{array}{c}\overset{\mrm{Rouss.}}{\longmapsto}\\\longmapsfrom\end{array}
  \abox{G}
  \begin{array}{c}\overset{\mrm{Alg. \ref{alg:congruent_heavy_labels}}} {\longmapsto}\\\longmapsfrom\end{array}
  \abox{H\\=\\\heavy(G)}
  \begin{array}{c}\overset{\mrm{VF2}}{\longmapsto}\\\longmapsfrom\end{array} \abox{H'\\=\\\text{Hardware}\\\text{coupling}\\\text{graph}}
\end{equation}
  \caption{\label{fig:steps_algorithm} An overview of the graphs related to line-graph routing. All maps from the right to the left are trivial given that $H'$ is a heavy graph.}
\end{figure}

The task can also be reversed: given a circuit with a coupling graph $G'$, find the hardware on which it can be run with low overhead (cf. \fig{steps_algorithm}). Again, the first place to look would be hardware with coupling graph $G'$. Line-graph routing extends the possibilities by adding hardware with a coupling graph of $\heavy(G)$ with $G$ such that $G' = L(G)$.

But how to find $G$ from the circuit coupling graph $G'$? This is less straightforward, first because it is not possible to find $G$ from $G'$ if $G'$ is not a line graph. In this case, line-graph routing cannot be applied. There are numerous straightforward ways of checking whether a graph is a line graph. For example, Beineke's theorem states that a graph is a line graph if and only if it does not contain an induced subgraph out of a set of nine forbidden subgraphs~\cite{beineke1970derivedgraphs}. One of these forbidden graphs is the claw~\hbox{(\claw)}. For example, since the heavy-hex and hexagonal lattices consist entirely of claws, they are themselves not line graphs. Second, even if $G'$ is a line graph, it is nontrivial to find the graph $G$ such that $G'=L(G)$. Nevertheless, Roussopoulos' algorithm~\cite{roussopoulos1973graph} finds $G$ from $G'$, or reports that $G'$ is not a line graph, in time $O(\max\{n,\lvert E_{G'} \rvert\})$, with $n$ the number of nodes and $\lvert E_{G'} \rvert$ the number of edges of $G'$.

We briefly introduce the concepts from Roussopoulos' algorithm that are useful to us later. If $G'$ is a line graph, Roussopoulos' algorithm partitions the edges of $G'$ into complete subgraphs in such a way that no node lies in more than two of the subgraphs (which is possible if and only if $G'$ is a line graph). Then, the nodes of $G$ correspond to the sets in the partition. Additionally, the nodes that lie in only one of the sets in the partition are added as nodes of $G$ as sets of length one. Two nodes in $G$ are adjacent if their corresponding sets have a nonempty overlap. For example, in \fig{four_coloring}, each triangle of colored edges forms a set in the partition of the kagome lattice because triangles are fully connected subgraphs of the kagome lattice and no node of the kagome lattice is in more than two triangles. Defining the partitions as the nodes of $L(G)$ and putting an edge between two nodes in $L(G)$ whenever the corresponding sets have a nonempty intersection results in the hexagonal lattice (open nodes in \fig{four_coloring}). 

For line-graph routing [as previewed in \eq{mediated_gate}], $L(G)$ must be mapped to $\heavy(G)$ in such a way that the labels of the heavy nodes of $\heavy(G)$ are identical to the labels of the nodes of $L(G)$. The output of Roussopoulos' algorithm provides a convenient way to achieve this labeling through the following algorithm. It takes a graph $G$, as generated by Roussopoulos' algorithm, as input and returns $H=\heavy(G)$, a heavy graph where the set of heavy nodes is equal to the set of nodes (i.e., with equal labels) of $G'=L(G)$. This generalizes and automatizes the identification of nodes made in \fig{four_coloring}. 
\begin{algorithm}[Congruent heavy labels]\label{alg:congruent_heavy_labels}
Consider the edges $(a,b)$ of $G$ for which both $a$ and $b$ contain more than one node. (The nodes in $a$ and $b$ are from $G'$, the latter of which need not be provided explicitly to the current algorithm.) For all such edges $(a,b)$, add $(a,c)$ and $(c,b)$ to an empty graph $H$. Because $a$ and $c$ are distinct sets in a partition of the edges of $G'$ into complete subgraphs and $a$ and $c$ have a nonempty overlap, $c$ contains a single entry and this entry is a node from $G'$. Now, consider the edge cases, that is, the edges ${(a,b)}$ in $G$ where either $a$ or $b$ is a set of length one. For every such edge, add $(a,b)$ to $H$.
\end{algorithm}

Everything is now in place for the succinct presentation of line-graph routing. It maps any circuit $C$, with unknown coupling graph $G'$, to a circuit with coupling graph $H$ whenever $G'$ is a line graph, $G'=L(G)$, and where $H=\heavy(G)$.
\begin{algorithm}[Line-graph routing]\label{alg:line-graph_routing} 
Construct the coupling graph $G'$ of the circuit $C$. Run Roussopoulos' algorithm on $G'$ to obtain $G=L^{-1}(G')$. Construct $\heavy(G)$ using \alg{congruent_heavy_labels}. For every two-qubit gate $U_{ij}$ in $C$, let $m=m_{ij}$ be the node in $\heavy(G)$ that is in between nodes $i,j$ of $\heavy(G)$ and replace $U_{ij}$ by ${MU}_{imj}:=\SWAP_{mj}U_{im}\SWAP_{jm}$.
\end{algorithm}
Note the existence and uniqueness of mediating qubit $m$ are guaranteed by the definition of line graphs. In some cases, some qubits, related to the so-called lone leaf nodes in the coupling graph of the output circuit of line-graph routing, can be removed from that circuit, reducing the SWAP and qubit count of line-graph routed circuits. This removal leads to a marginal improvement, which may nevertheless be crucial given hardware with few qubits, but arguably obfuscates the general idea of line-graph routing. The definition of lone leaf nodes and the method for their removal is presented in \app{lone_leafs}.

As opposed to heuristic methods~\cite{li2019mapping}, line-graph routing is deterministic, allowing rigorous performance guarantees. Given the time complexity of Roussopoulos' algorithm~\cite{roussopoulos1973graph}, it is straightforward to show that the time complexity of line-graph routing is $O(\La^2)$, with $\La$ the number of gates in the input circuit $C$. A tighter bound on the time complexity can possibly obtained, but this requires a more sophisticated analysis. Such an analysis is unnecessary for the current purposes because a nonoptimized implementation of line-graph routing already routes circuits with thousands of qubits and hundred thousands of gates within a minute on a standard laptop~\cite{Note2}.

To utilize line-graph routing in practice, one additional step may be required. Quantum hardware providers will generally use a specific labeling of the qubits on their hardware, leading to a hardware coupling graph $H'$. This labeling may differ from the labeling of $H=\heavy(G)$ obtained through \alg{congruent_heavy_labels}. However, $H$ can be  mapped to a subgraph of $H'$ using an algorithm that finds subgraph isomorphisms, such as the VF2 algorithm~\cite{cordella2004VF2}. The VF2 algorithm generates a list of isomorphic subgraphs, and in practice, a selection is made based on a performance metric, such as the average two-qubit gate fidelity, to identify the subgraph with the best performance. This step is commonly implemented in quantum computing software development kits. In this paper, we also refer to $H$ as the hardware coupling graph and in this wording leave the possible mapping to the fixed qubit labels provided by a hardware provider implicit. 

\section{Implementation and benchmarking}\label{sec:implementation}

\setlength\tabcolsep{1em}
\begin{table*}
\begin{tabular}{ccccccc}
$L(G)$                    & Routing method  & Opt. depth  & $n_\SWAP$  & $n_\mrm{qubit}$  & $t_\mrm{tot}$ (s) & $\bar t$ (s)   \\
  \hline \hline
Complete                  & line-graph      & 1036        & 650       & 10               &   $<$1           & $<$1         \\
                          & SABRE           & 720         & 146       & 9                &   3              & $<$1         \\
\hline
Kagome                    & line-graph      & 226         & 7968      & 300              &   43             & 43           \\
                          & SABRE           & 790         & 8486      & 200              &  886             & 55 $\pm$ 1   \\
\hline
Shuriken                  & line-graph      & 209         & 13600     & 476              &  77              & 77           \\
                          & SABRE           & 1099        & 17064     & 323              &  2520            & 157 $\pm$ 4  \\
\hline
Checkerboard              & line-graph      & 435         & 18521     & 393              &  67              & 67           \\
                          & SABRE           & 2121        & 23060     & 282              &  1750            & 109 $\pm$ 1  \\
  \end{tabular}
  \caption{\label{tab:benchmark} Excerpt of the benchmarking data available in the Supplemental Material~\cite{Note2}. The column headers are defined in the main text.
  }
\end{table*}

In the Supplemental Material \footnote{Supplemental Material available at [URL]. In the Supplemental Material, we implement, showcase and benchmark line-graph routing for various graphs. The Supplemental Material is also available at Ref.~\cite{github}}, we implement, showcase and benchmark line-graph routing, Algorithm~\ref{alg:line-graph_routing}, together with the removal of lone leafs (\app{lone_leafs}). The implementation takes any Qiskit~\cite{qiskit} quantum circuit consisting of one- and two-qubit gates, constructs its coupling graph $L(G)$ or reports that the coupling graph is not a line graph, finds $G$ and $\heavy(G)$, and outputs the routed circuit with coupling graph $\heavy(G)$. The implementation does not rely heavily on Qiskit's methods and can hence straightforwardly be transformed to an implementation in other quantum software development kits.

Line-graph routing is benchmarked against all routing methods available in Qiskit by default~\cite{qiskit,Note2}. In this section, we show line-graph routing is able to confidently outperform these default methods on relevant problem instances. There are also problem instances where line-graph routing does not outperform the default methods. In the end of this section, we discuss for which types of instances we expect line-graph routing to be superior. We consider two types of circuits.

\emph{(i) Random.} With probability 2/5, a CNOT gate is placed along a randomly chosen edge of a given virtual graph. With a probability 3/5 a gate from the set $\{H,S,T\}$ is chosen uniformly at random and placed at a random node.

\emph{(ii) Quantum simulation.} As described in detail for the kagome lattice (\sec{quant_sim}), circuits for the dynamical and static quantum simulation of the HAFM on any lattice can be defined by an edge-coloring of that lattice~\cite{kattemolle2022variational,burkard2022recipes}. An edge coloring of a graph is an assignment of colors to the edges such that no two edges with the same color are incident on the same node. This edge coloring is called minimal if it uses the least possible number of colors. We perform edge coloring of the virtual lattices by an automatic method that generally finds an edge coloring that is not minimal. The benefit of this coloring method is that it does not require a manual assignment of edge colors. The downside is that we expect line-graph routing to work best (compared to other methods) for circuits derived from a minimal edge coloring. This does not pose a problem because, as we show in this section, line-graph routing is already able to outperform the default methods in Qiskit in routing circuits derived from a nonminimal edge coloring.

We found SABRE~\cite{li2019mapping} to outperform the other methods in Qiskit and therefore we focus on a comparison between line-graph routing and SABRE in what follows. Unlike line-graph routing, SABRE is a probabilistic routing method that obtains a different qubit routing with each run. Additionally, the intensity of the optimization that is part of SABRE can be varied, leading to a tradeoff between the classical resources required and the performance characteristics of the routed circuits. We address these issues by running SABRE 16 times (at fixed optimization level) and comparing the performance against one run of line-graph routing. We do this separately for every optimization level available by default in Qiskit, which range from optimization level 0 (`no optimization') to optimization level 3 (`heavy-weight optimization') in integer steps~\cite{qiskit}.

In Table \ref{tab:benchmark}, we show an excerpt of the benchmarking data, which includes problem instances on which line-graph routing does and does not perform well. The following performance characteristics are listed.

\emph{(i) Opt. depth.} The optimal (lowest) depth reached by the routing method among all runs (line-graph routing is run once per virtual graph, SABRE is run 16 times per virtual graph). Routed circuits are obtained by inserting $\SWAP$ gates (as dictated by line-graph routing or SABRE) and no further gate identities are used to simplify the resulting circuits. So, in the case of random input circuits, the routed circuits consist of gates from the set $\{\CNOT, H,S ,T\} \cup \{\SWAP\}$. The routed circuits contain, for example, double $H$ gates if those were present in the input circuit. In the case of quantum simulation input circuits, the routed circuits consist of gates from the set $\{\mrm{SINGLET}, \HEIS(\al)\} \cup \{\SWAP\}$.

\emph{(ii) $n_\SWAP$.} The number of $\SWAP$ gates of the routed circuit that achieved the lowest depth.

\emph{(iii) $n_\mrm{qubit}$.} The number of active qubits in the routed circuit that achieved the lowest depth.

\emph{(iv) $t_\mrm{tot}$.} The total wall-clock time needed to run all runs of the routing method. For line-graph routing, this includes the time needed to find $\heavy(G)$ from $L(G)$. The implementation of line-graph routing repeatedly loops through all gates using (slow) Python loops and can likely be sped up considerably, if needed. We use Qiskit's standard implementation of SABRE. SABRE is given the target graph $\heavy(G)$ as input and hence finding $\heavy(G)$ from $L(G)$ is not included in its wall-clock time. The benchmarks for different methods are always run on the same hardware.

\emph{(v) $\bar t$.} The average wall-clock time for a single run of the routing method. The error bars are calculated by bootstrapping the individual wall-clock times and represent symmetrized 95\% confidence intervals.

The first two data lines of Table \ref{tab:benchmark} show the performance characteristics of line-graph routing and SABRE when applied to a random circuit on the complete graph with 9 nodes. SABRE was run with optimization level 1. (Passing a higher optimization level to Qiskit's transpiler will trigger the usage of gate identities.) Already at an optimization level of 1 SABRE outperforms line-graph routing on all performance characteristics considered except the total wall-clock time.

Data lines 3 and 4 of Table \ref{tab:benchmark} show the performance characteristics of line-graph routing and SABRE when applied to circuits for the quantum simulation of the HAFM on patches of the kagome lattice measuring $7\times 7$ unit cells, with open boundary conditions and padded edges (see~\cite{Note2}). Here, SABRE was run at Qiskit's transpiler optimization level 3, the highest optimization level available. As opposed to the transpilation of the random circuits, no gate identities are used in this process because the gates $\{\SWAP, \mrm{SINGLET}, \HEIS\}$ have unknown properties to Qiskit's transpiler~\cite{qiskit}. Line-graph routing outperforms SABRE in terms of the optimal depth of the routed circuits by about a factor of 3.5, while at the same time requiring less time than one run of SABRE. The line-graph routed circuit also uses less $\SWAP$ gates than the SABRE routed circuit with the lowest depth. However, the line-graph routed circuit uses a factor of 1.5 more qubits than the SABRE routed circuit that achieved the lowest depth. Nevertheless, the space-time volume of the line-graph routed circuit is still about half of the space-time volume of the SABRE routed circuit with the lowest depth. Similar results, with an even larger performance gap, hold for the routing of circuits for the quantum simulation of the HAFM on patches of the shuriken lattice measuring $7 \times 7$ unit cells (lines 5 and 6 of Table \ref{tab:benchmark}) and patches of the checkerboard lattice of $7.5\times 7.5$ unit cells (last two lines of \tab{benchmark}).

Line-graph routing was conceived while keeping in mind its application in mapping quantum circuits to hardware with a lattice-like low-degree coupling graph. It is therefore not expected to perform well in mapping quantum circuits to hardware with a high-degree coupling graph without lattice-like structure. In fact, the benchmarking results show that line-graph routing is not well-suited for mapping circuits on the complete coupling graph to hardware with a star coupling graph. One property of line-graph routing that leads to its low effectivity on unstructured, high-degree graphs is that in the output circuit of line-graph routing, the qubits are assigned a base location where they return to after they are acted on by one [\eq{mediated_gate}] or multiple [\eq{cancel_SWAP}] gates from the input circuit. This is likely advantageous for input quantum circuits with a structured, low-degree coupling graph. For example, in \fig{circuit}, in the third layer of the cycle (red gates), all qubits are still close to where they are needed, leading to the insertion of few SWAP gates to get them there.

However, on high-degree coupling graphs, the property of a base location need not be advantageous since any qubit can be routed to any other qubit in relatively few steps and regularly returning qubits to their base location leads to the insertion of unnecessary SWAP gates. As an extreme example, let us look at the action of line-graph routing on a circuit $C$ with a complete coupling graph, where at one point in $C$ a cascade of CNOT gates, $\prod_{i=n}^1\CNOT_{i,i+1}$, is prescribed. Line-graph routing maps the circuit $C$ to a circuit on the star graph (\tab{examples}). To perform $\CNOT_{i,i+1}$ on star-graph hardware, line-graph routing first swaps qubit $i$ to the center of the star, performs a CNOT between the center qubit and qubit $i+1$, and swaps qubit $i$ back to its original position [\eq{mediated_gate}]. Not insisting that the qubits eventually return to their original position leads to the possibility of more efficient routing strategy. To perform the $\CNOT_{i,i+1}$, swap qubit $i$ to the center, perform the CNOT between the center qubit and qubit $i+1$, and leave qubit $i$ in the center. After repeating this procedure for the subsequent CNOT in the circuit, qubit $i$ will end up at the initial location of qubit $i+1$. The latter approach uses half of the number of SWAP gates compared to line-graph routing.

A second situation in which line-graph routing is not expected to perform well is when there are subgraphs of the virtual graph $L(G)$ that are isomorphic to subgraphs of $\heavy(G)$. In that case, unnecessary mediating qubits may be inserted. For example, let $L(G)$ be the path graph on $n$ nodes. When applied to circuits on this graph, line-graph routing introduces mediating qubits between all qubits of $L(G)$, leading to a circuit on a path graph with $n-1$ added mediating qubits. This happens despite the fact that no routing is needed at all. If needed, line-graph routing can be enhanced such that it detects and prevents this behavior.

\section{Discussion and outlook}\label{sec:discussion}
In this paper, we developed line-graph routing, a qubit routing strategy that efficiently and deterministically maps any quantum circuit on a line graph $L(G)$ to a circuit on the heavy graph $\heavy(G)$. By software implementation and benchmarking, we showed its ability to outperform standard, general-purpose methods on input quantum circuits, circuit sizes, and hardware connectivity graphs of practical relevance.

Line-graph routing showed not to perform well in mapping circuits on the complete graph to the star graph. We attribute this to the high degree and absence of a lattice-like structure of the complete graph. Based on our benchmarking results, we expect line-graph routing to outperform general-purpose methods in routing circuits with a line-graph coupling graph to low-degree hardware coupling graphs. For superconducting qubits, these are exactly the hardware coupling graphs preferable from an engineering standpoint~\cite{chamberland2020heavylattices,nation_paik_cross_nazario_2022, bravyi2022quantumcentric}.

Line-graph routing is limited in its applicability because it is only defined for pairs of graphs $(L(G),\heavy(G))$. General-purpose methods are able to map any circuit on any graph to a circuit on any other graph (given that the latter graph is connected and has at least the same number of nodes as the former graph). However, in general, finding the optimal routing strategy is NP-hard. It is therefore unlikely that general-purpose methods can find the optimal or close-to-optimal routing strategy. Routing strategies that are defined on a subset of possible input circuits can nevertheless be highly efficient and effective, and indeed line-graph routing is shown to have the ability to outperform standard-purpose methods on the subset of problem instances for which it is defined. The set of pairs of graphs $(L(G),\heavy(G))$ is still infinitely large and contains pairs of practical relevance.

The routed circuit for the simulation of a nearest-neighbor model on $L(G)$ can be reinterpreted as  a circuit for the quantum simulation of a next-nearest-neighbor model on $\heavy(G)$. This is evident from \tab{examples} and \fig{four_coloring} and follows directly from the definition of line graphs as given in \sec{general_case}. This routed circuit for the simulation of a nearest-neighbor model on $L(G)$ can be lifted to a circuit for the quantum simulation of a model containing both nearest- and next-nearest-neighbor interactions on $\heavy(G)$ by adding the gates arising from nearest-neighbor interactions on $\heavy(G)$ to the routed circuit. These gates naturally satisfy the $\heavy(G)$ connectivity graph and can thus be added to the circuit without any routing. This further enlarges the range of applicability of line-graph routing.

The effectiveness of line-graph routing may be improved further by leveraging the freedom of which qubit is swapped with the mediating qubit. In line-graph routing, for the implementation of the gate $U^\ell_{ij}$, it is qubit $i$ that is swapped with the mediating qubit [\eq{mediated_gate}]. However, in some cases, it is only when swapping qubit $j$ with the mediating qubit that a cancellation of SWAP gates from consecutive mediated two-qubit gates $MU$ [\eq{mediated_gate}, \fig{circuit}] occurs. Thus, line-graph routing may be improved by letting the decision of which qubit to swap with the mediating qubit depend on the ensuing gates in the input quantum circuit.

Although we showed the ability of line-graph routing to outperform general-purpose methods, we did not prove it gives the optimal routing. In fact, it was shown to be suboptimal in cases where the input circuit is a circuit on the complete graph. A proof or refutation for the optimality of line-graph routing (possibly after including the improvement of the previous paragraph) would therefore require careful consideration of the allowed input circuits and performance characteristic for which optimality is considered. Such a proof may inspire even stronger methods that build on or generalize line-graph routing.

A straightforward but exciting way to carry on the work in the current paper is to run our routed circuits on actual quantum hardware. The circuits available in the Supplemental Material~\cite{Note2} can be executed as-is on hardware with the appropriate hardware coupling graphs. Remaining challenges therein are the extraction of useful physical quantities from the generated quantum states. To obtain results that go beyond anything that can be obtained classically requires further improvements of error mitigation techniques and quantum hardware.
\ \\ 

All code and data used to generate the results in this paper are available as Supplemental Material~\cite{Note2} and at Ref.~\cite{github}.

\begin{acknowledgements}
JK acknowledges funding from the Competence Cen-
ter Quantum Computing Baden-W\"urttemberg, under the
project QORA. Benchmarking was carried out on
the Scientific Compute Cluster of the University of Konstanz (SCCKN).
\end{acknowledgements}

\bibliography{bib.bib}

\appendix

\section{Removal of lone leaf nodes}\label{sec:lone_leafs}
In some cases, line-graph routing produces circuits containing qubits $i$ that are swapped with only one corresponding mediating qubit $m_i$ during the course of the entire circuit. In these cases, the mediating qubits $m_i$ can be removed and replaced by the corresponding qubits $i$, leading to a reduction in the qubit and SWAP count of the routed circuit. 

We define a \emph{lone leaf} $i$ as a node in a coupling graph $\heavy(G)$ that is of degree one and shares its only edge with a node that has no other neighbors of degree one. We do not assume the coupling graph is a tree. As an example, consider the triangle graph with an extra node $i$ connected to one of the nodes $m_i$ of the triangle. The node $i$ is a lone leaf.

If a node $i$ is a lone leaf in a coupling graph $\heavy(G)$, by the construction of line-graph routing its neighbor $m$ must be a mediating node. For every two-qubit gate that is performed between qubit $i$ and any other neighbor $j$ of $m$, qubits $i$ and $m$ first need to be swapped. So, as far as the interactions with qubit $i$ are concerned, qubit $m$ may be fully eliminated by simply removing qubit $m$ and reverting mediated two-qubit gates $MU_{imj}$ to the original gates $MU_{imj}$. Physically, qubit $i$ can be put on the place of qubit $m$. After the removal of $m$ and the relocation of qubit $i$, inspection of \eq{mediated_gate} reveals that qubit $i$ may in fact act as a mediating node for any two-qubit gates between any two neighbors $j,k$ of $m$ unequal to $i$; after a mediated gate $MU_{jik}$, qubit $i$ returns to its starting position unaffected. 
\end{document}